\newtheorem{definition}{Definition}
\newtheorem{example}{Example}
\newtheorem{proposition}{Proposition}
\newtheorem{challenge}{Research Challenge}
\newcommand{\ignore}[1]{}
\title{The Affiliate Matching Problem: On Labor Markets where Firms are Also Interested in the Placement of Previous Workers\footnote{This is working paper of early-stage research.  We realize the matching literature is expansive and multi-disciplinary, and we welcome connections between published research and our model. Corresponding author: sdooley1@cs.umd.edu}}
\author {
    Samuel Dooley \qquad \qquad John P. Dickerson\\
    \small Computer Scicence Department\\
    \small University of Maryland
}
\date{}
\begin{document}

\maketitle

\begin{abstract}
 In many labor markets, workers and firms are connected via affiliative relationships.  A management consulting firm wishes to both accept the best new workers but also place its current affiliated workers at strong firms.  Similarly, a research university wishes to hire strong job market candidates while also placing its own candidates at strong peer universities.  We model this \emph{affiliate matching} problem in a generalization of the classic stable marriage setting by permitting firms to state preferences over not just which workers to whom they are matched, but also to which firms their affiliated workers are matched. Based on results from a human survey, we find that participants (acting as firms) give preference to their own affiliate workers in surprising ways that violate some assumptions of the classical stable marriage problem.  This motivates a nuanced discussion of how stability could be defined in affiliate matching problems; we give an example of a marketplace which admits a stable match under one natural definition of stability, and does not for that same marketplace under a different, but still natural, definition. We conclude by setting a research agenda toward the creation of a centralized clearing mechanism in this general setting.

\end{abstract}

\epigraph{{\it My loyalty for my student made me choose the best place for him.}}{{Survey Respondent \#103}}
\vspace{-2em}
\epigraph{{\it Strange here because I opted to go with my third choice Ryan over getting my second choice ... an emotional reaction because he is a graduate of my school. }
}{{Survey Respondent \#6}}
\vspace{-2em}
\epigraph{{\it I would not play favorites with my student. It would not be fair.}}{Survey Respondent \#136}

\section{Introduction}\label{sec:intro}

Markets are systems that empower interested parties---humans, firms, governments, or autonomous agents---to exchange goods, services, and information.  In some markets, such as stock and commodity exchanges, prices do all of the ``work'' of matching supply and demand.  Yet, due to logistical or societal constraints~\citep{Roth07:Repugnance}, many markets cannot rely \emph{solely} on prices to match supply and demand.  Examples include school choice~\citep{Abdulkadiroglu03:School},
medical residency~\citep{Roth99:Redesign}, online dating~\citep{Bragdon10:Matching}, advertising~\citep{Edelman07:Internet}, deceased-donor organ allocation~\citep{Bertsimas13:Fairness}, online labor~\citep{Arnosti14:Managing}, housing allocation~\citep{Bloch13:Markovian}, job search~\citep{Das10:When}, refugee placement~\citep{Delacretaz19:Matching}, and barter markets such as kidney exchange~\citep{Dickerson15:FutureMatch,Manlove15:Paired}.


We focus on two-sided \emph{matching problems} where agents on one side of a market are allocated to agents on the other based solely on ranked preferences over potential matching outcomes.  Traditionally, agents express a preference ordering over other agents, e.g., as in the classical stable marriage problem due to~\citet{Gale62:College}.
Yet, agents may implicitly have preferences over their own match outcome, but \emph{also} over the match outcome of others---though they are seldom able to express that latter class of preference explicitly.  As an explicit example, in an academic hiring market, universities both place candidates into the job market (via graduation) and take candidates out of the job market (via hiring).  Universities care not only about the quality of candidate that they hire, but also about the match outcome(s) of their graduating candidates.  Thus, expression of such preferences by a firm (\emph{n{\'e}} employer) necessarily extends beyond a simple rank ordering of workers (\emph{n{\'e}} candidates) on the opposing side.

\noindent\textbf{Our contributions.}  In this paper, we study how a class of explicit complex preferences over match outcomes impacts a matching market.  Specifically, we:
\begin{itemize}
    \item Initiate the study of the \emph{affiliate matching} problem within a generalization of the classical stable marriage setting;
    \item Provide experimental evidence that humans (acting as firms) on one side of a market do ``act differently'' with respect to stable marriage when expressing preferences over match outcomes when worker affiliations are present;
    \item Draw on joint intuition from the matching literature as well as our human experiment to define appropriate notions of a standard desideratum in matching settings, \emph{stability}, and show that different natural definitions of stability result in the differing existence of stable matchings; and
    \item Discuss four concrete research challenges for future scholarship in this novel matching setting. 
\end{itemize}

\noindent\textbf{Placement in the literature.}
The stable marriage problem, as presented by~\citet{Gale62:College}, operates a centralized matching scheme where agents on one side of a market have strict \emph{ordinal preferences} over (subsets) of agents on the other side.  Here, \emph{stability} is a central desideratum to be achieved in a final match outcome (see work due to~\citet{Roth82:Economics} for additional motivation); we continue that focus in this paper, albeit (necessarily) by defining new notions of stability to fit our general model. 

Many variants of stable matching problems have been explored, including: additional constraints such as capacity~\citep{Gusfield89:Stable}, where agents on one side of the market may be matched to more than one but at most some limit (with application found in the National Residency Matching Program (NRMP)~\citep{Roth99:Redesign}); preference orderings that are incomplete~\citep{Gusfield89:Stable} or are not strict~\citep{Gusfield89:Stable,Irving94:Stable}; non-bipartite markets as in the case of the stable roommates problem for additional coverage; and many others.  We direct the reader to a recent comprehensive overview due to~\citet{Manlove13:Algorithmics}.  In our work, we extend the preference language beyond that found in these and other traditional settings (described in greater detail next in Section~\ref{sec:model}).

The AI community has played a central role in understanding the computational complexity impacts of various constraints and generalizations in matching problems~\citep{Iwama99:Stable,Manlove02:Hard,Liu14:Internally,Perrault16:Strategy-proofness,Delorme19:Mathematical}.  While basic forms of matching are often solvable in polynomial time (see, for example, the seminal work due to~\citet{Gale62:College} as well as~\citet{Irving85:Efficient} for two well-known examples), generalized versions often required the use of sophisticated optimization approaches such as constraint programming~\citep{Prosser14:Stable,Manlove17:Almost-stable}, SAT~\citep{Drummond15:SAT,Endriss20:Analysis}, answer set programming~\citep{Erdem20:General}, integer programming~\citep{Delorme19:Mathematical}; we direct the reader to a recent survey due to~\citet{Geist17:Computer-aided} for an in-depth review of computational approaches to clearing stable matching problems.  In Section~\ref{sec:stability}, we follow this trend and provide a general integer-programming-based approach to finding stable matchings (or proving one does not exist) in our context.

\section{Model of Affiliate Matching}\label{sec:model}

We now set up the notation, intuition, and initial observations of modeling the affiliate matching problem.

We are motivated by scenarios where the two sides of the market are interconnected via affiliations, like faculty hiring markets, certain consulting labor markets, etc. We want to study how these affiliations influence the dynamics of this matching market; so we first define the problem.

\subsection{Affiliate Matching}

We introduce a new matching model, which we call {\bf affiliate matching}, which assumes there are two, disjoint sides of a market where each side has \emph{affiliations} with the other. In the case of the faculty matching market, an applicant and a university have an affiliation if the applicant attended that university for their graduate studies. In some labor markets like management consulting, an applicant and employer have an affiliation if the applicant was previously employed by that employer.
Formally, let $A=\{a_i\}_{i=1}^n$ be a set of applicants and $E=\{e_j\}_{j=1}^m$ be a set of employers. For each employer, $e$, let us denote the set of applicants affiliated with $e$ with the \emph{ordered} set $R_{e}$. 
Denote the cardinality of $R_{e}$ as $r_e$. 

We define matches as in the standard bipartite matching problem. A match $\mu$ is a function $\mu:A \bigsqcup E \to A \bigsqcup E$ such that for each element $a\in A$, we have $\mu(a)\in E$, and for each element $e\in E$, we have $\mu(e)\in A$. Further, we note that $\mu(a) =e$ if and only if $\mu(e)=a$.

Based on the motivating scenarios, it makes sense to have the preference languages for each side of the market vary slightly. The applicants will have their standard preference language, as in stable marriage. We denote $\prec_{a}$ to be the preference of applicant $a$ where $\alpha\prec_{a}\beta$ means that outcome $\beta$ is preferred to outcome $\alpha$ by candidate $a$. 
The employers however will have a modified preference language that indicates where they would like to see their affiliates be matched. 
Each employer has preference orderings $\prec_{e}$ over the tuples of $A\times E^{r_e}$. 
For employer $e$, we interpret an outcome $(a,\gamma_1,\dots,\gamma_{r_e})\in A\times E^{r_e}$ to mean that the employer $e$ is matched with applicant $a$ and would like the $k$-th member of $R_e$ to be matched with employer $\gamma_k$.
We note that $\prec_e$ will not be a complete and total ordering over $A\times E^{r_e}$ as there may be inconsistencies present in a tuple.
For example, one cannot prefer to be matched with their own affiliate and have their affiliate be matched elsewhere.
For this paper, let us assume that preferences are expressed as strict, for ease of introduction. 

\definecolor{myblue}{RGB}{80,80,160}
\definecolor{myred}{RGB}{160,80,80}

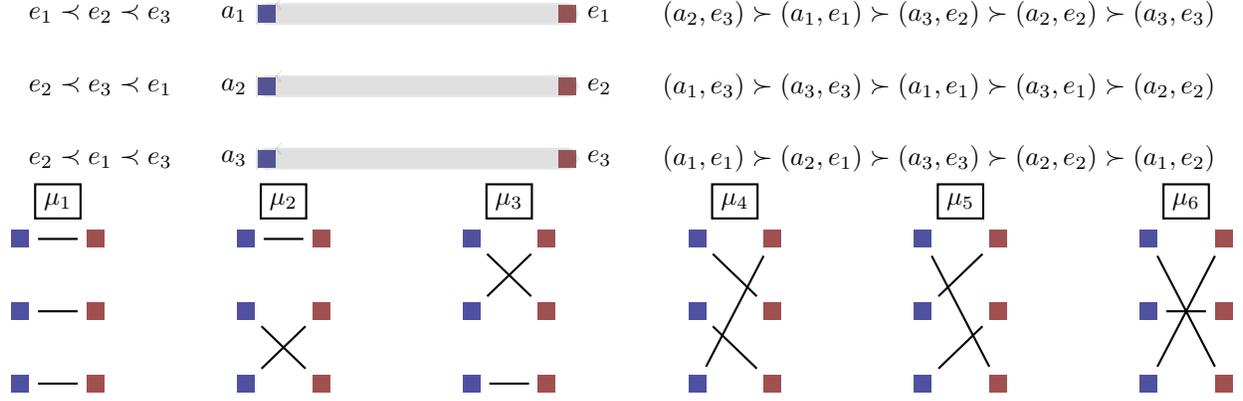
\begin{figure*}
    \centering
    \begin{tikzpicture}[thick,
      asnode/.style={fill=myblue},
      esnode/.style={fill=myred},
      every fit/.style={ellipse,draw,inner sep=-2pt,text width=2cm},
      ->,shorten >= 3pt,shorten <= 3pt
    ]
    
    \begin{scope}[start chain=going below,node distance=7mm]
    \foreach \i in {1,2,3}
      \node[asnode,on chain] (a\i) [label=left: $a_\i$] {};
    \end{scope}
    \node[left= of a1] {$e_1 \prec e_2 \prec e_3$};
    \node[left= of a2] {$e_2 \prec e_3 \prec e_1$};
    \node[left= of a3] {$e_2 \prec e_1 \prec e_3$};
    
    \begin{scope}[xshift=4cm,start chain=going below,node distance=7mm]
    \foreach \i in {1,2,3}
      \node[esnode,on chain] (e\i) [label=right: $e_\i$] {};
    \end{scope}
    \node[right= of e1] {$(a_2,e_3) \succ (a_1,e_1) \succ (a_3,e_2) \succ (a_2,e_2) \succ (a_3,e_3)$};
    \node[right= of e2] {$(a_1,e_3) \succ (a_3,e_3) \succ (a_1,e_1) \succ (a_3,e_1) \succ (a_2,e_2)$};
    \node[right= of e3] {$(a_1,e_1) \succ (a_2,e_1) \succ (a_3,e_3) \succ (a_2,e_2) \succ (a_1,e_2)$};
    
    \draw[draw=gray, rounded corners, thick, fill=gray!80!black, opacity=0.2]
        (a1.north) -| (e1.south east) -- (a1.south east) -| 
        (a1.west) |- (a1.north);
    \draw[draw=gray, rounded corners, thick, fill=gray!80!black, opacity=0.2]
        (a2.north) -| (e2.south east) -- (a2.south east) -| 
        (a2.west) |- (a2.north);
    \draw[draw=gray, rounded corners, thick, fill=gray!80!black, opacity=0.2]
        (a3.north) -| (e3.south east) -- (a3.south east) -| 
        (a3.west) |- (a3.north);
    \end{tikzpicture}
    \begin{tikzpicture}[thick,
      asnode/.style={fill=myblue},
      esnode/.style={fill=myred},
      every fit/.style={ellipse,draw,inner sep=-2pt,text width=2cm},
      ->,shorten >= 3pt,shorten <= 3pt
    ]
    \foreach \j in {0,1,2,3,4,5}
    {%
        \pgfmathtruncatemacro{\jpo}{\j+1}
        \pgfmathsetmacro{\aj}{3*\j}
        \pgfmathsetmacro{\ej}{3*\j+1}
        \pgfmathsetmacro{\mup}{3*\j+0.5}
        \node[draw] at (\mup,0.5) (mu\j)  {$\mu_\jpo$};
        \begin{scope}[xshift=\aj cm, start chain=going below,node distance=7mm]
        \foreach \i in {0,1,2} 
          {
          \node[asnode,on chain] (a\i\j) {};
          }
        \end{scope}
        \begin{scope}[xshift=\ej cm,start chain=going below,node distance=7mm]
        \foreach \i in {0,1,2}
          {
          \node[esnode,on chain] (e\i\j) {};
          }
        \end{scope}
        \pgfmathsetmacro{\jj}{1\j}
        \pgfmathsetmacro{\jj}{1\j}
        \ifthenelse{\j < 2}{\pgfmathsetmacro{\eoj}{0}}{
        \ifthenelse{\j < 4}{\pgfmathsetmacro{\eoj}{1}}{\pgfmathsetmacro{\eoj}{2}}
        };
        \ifthenelse{\j=0 \OR \j=5}{\pgfmathsetmacro{\etj}{1}}{
        \ifthenelse{\j=1 \OR \j=3}{\pgfmathsetmacro{\etj}{2}}{\pgfmathsetmacro{\etj}{0}}
        };
        \ifthenelse{\j=1 \OR \j=4}{\pgfmathsetmacro{\ehj}{1}}{
        \ifthenelse{\j=0 \OR \j=2}{\pgfmathsetmacro{\ehj}{2}}{\pgfmathsetmacro{\ehj}{0}}
        };
        \path[-] (a0\j) edge (e\eoj\j);
        \path[-] (a1\j) edge (e\etj\j);
        \path[-] (a2\j) edge (e\ehj\j);
    }
    \end{tikzpicture}       
    \caption{Above, a completely expressed example marketplace between employers, each with one affiliate. Affiliations are denoted with a gray connection between $a_i$ and $e_i$ for $i=1,2,3$. Below, the 6 possible matchings are enumerated.}
    \label{fig:example marketplace}
\end{figure*}
\begin{example}
Let us look at an example in Figure \ref{fig:example marketplace}.
In this example, we have three agents on each side of the market: three applicants $a_1$, $a_2$, $a_3$, and three employers $e_1,e_2,e_3$. Each employer only has one affiliate, $r_e=1$ for all $e$, and $a_i$ is the affiliate of $e_i$ for all $i$ (denoted with the gray connections between the two sides). Each agent's preferences are expressed in the preference language defined above.
\end{example} 

We observe something unique about this model that has particular salience for affiliate matching: we can relate the preference profiles of the agents to the structure of the possible matchings. 
This is true in the stable marriage problem, but has little utility---an agent's preference language only permits expression of a preference over their own match. In this case, like for $a_1,a_2,a_3$, we can relate their profile to the six possible matchings $\mu_i$ for $i=1,\dots 6$. For instance, there is a one-to-one relationship between  $a_1$'s preference profile $e_1\prec e_2 \prec e_3$ and $\mu_1,\mu_2\prec \mu_3,\mu_4 \prec \mu_5,\mu_6$. We can do a similar thing for $e_1$ and rewrite their preference profile as $\mu_5 \succ \mu_1,\mu_2 \succ \mu_4 \succ \mu_3 \succ \mu_6$. Rewriting the preferences like this demonstrates the connection between a preference profile for an employer and the overall structure of its preferred matches. We observe that the employer preference language enables expression over {\it portions} of the final matching structure. s

\subsection{Connection to Other Preference Profiles}

When considering the affiliate matching problem, it appears to be a superficial change in the preference language for the employers---matches are still bipartite and employees still have their same preference language. However, this slight change brings about the question: \emph{How do employers combine their thoughts about applicants with their thoughts about their fellow employers?} It is natural to assume that each employer has some internal preference over which applicant they would like to hire. Also, they may have an ordering of the fit of the other employers for their affiliates. They may use this to articulate their preference that their affiliates be matched at higher ranking employers. It is not immediately determined how to convert two preference profiles over applicants and other employers into the preference language of tuples over $A\times E^{r_e}$. 

To formalize, consider an employer $e$. The employer may have preferences over the applicants to which they are matched, $\overline{\prec_e}$, and also preferences over their fellow employers, $\hat{\prec_e}$. They ultimately must derive a strategy $\alpha$ which takes these two profiles and outputs $\prec_e$, i.e., $\alpha(\hat{\prec_e},\overline{\prec_e})=\prec_e$. In the case of $r_e=1$, there are $((m-1)(n-1)+1)!$ possible choices for the strategy $\alpha$. Of course, a good number of them wouldn't be rational, as we can see in this example. 

\begin{example}\label{expl: consistency}
Consider a marketplace with three applicants (Ryan, Alex, and Taylor), and three employers (the Bear Mountain University (BMU), Littlewood University (LU), and West Shores University (WSU)). Assume that BMU has this preference over applicants: 
$$\text{Alex} \quad \overline{\succ} \quad \text{Ryan} \quad \overline{\succ} \quad \text{Taylor};$$ 
and BMU had this preference over employers: 
$$\text{LU} \quad \hat{\succ} \quad \text{BMU} \quad \hat{\succ} \quad \text{WSU}.$$ 
Further assume that $R_{\text{BMU}} = \{\text{Ryan}\}$. Here are some possible combination strategies $\alpha$:
\begin{enumerate}
    \item (A, LU) $\succ$ (A, WSU) $\succ$ (R, BMU) $\succ$ (T, LU) $\succ$ (T, WSU)
    \item (A, LU) $\succ$ (T, LU) $\succ$ (R, BMU) $\succ$ (A, WSU) $\succ$ (T, WSU)
    \item (T, WSU) $\succ$ (T, LU) $\succ$ (A, WSU) $\succ$ (A, LU) $\succ$ (R, BMU)
    \item (A, LU) $\succ$ (R, BMU) $\succ$ (A, WSU) $\succ$ (T, LU) $\succ$ (T, WSU)
    \item \dots
\end{enumerate}
The first two examples could be considered rational on the part of the employer $e$. We can interpret (1) as an employer who prefers being matched to their preferred applicant more than ensuring a good placement for their affiliate. The second strategy (2) does just the opposite---prioritizing the placement of their affiliate. It is difficult to label the third strategy (3) as rational, and finally the fourth strategy (4) is somewhere between (1) and (2).
\end{example}

\subsection{Consistent Employer Profiles }

The first combination strategy in Example \ref{expl: consistency} is worth further discussion. 
We observe that this strategy corresponds with a rational agent who has no regard for the matching of their affiliate. In the most extreme sense, this affiliate-agnostic agent in the Example \ref{expl: consistency} setting might randomly choose\footnote{With a relaxation of strict orderings, this could be expressed via ties/indifference.} from the set of four orderings 
$$\text{(A, ?)} \succ \text{(A, ?)} \succ \text{(R, BMU)} \succ \text{(T, ?)} \succ \text{(T, ?)}$$
where `?' is randomly chosen in $\{$LU,WSU$\}$.

One property of this affiliate-agnostic agent's preference profile is that the first element in every tuple of $\succ$ preserves the order of their $\overline{\succ}$ profile. Therefore, we have the following definition.

\begin{definition}
An employer's preference profile $\succ$ is said to be {\bf consistent} (with respect to $\overline{\succ}$) if the ordering of the first element of each tuple of $\succ$ preserves the ordering of $\overline{\succ}$.
An employer is said to be {\bf affiliate-agnostic} if their preference profile are consistent.
\end{definition}

If all agents were affiliate-agnostic with consistent preferences, then it might be fair to conclude that the affiliate matching model is not interesting and just reduces back to the stable marriage problem. However, it is reasonable to believe that not all employers, and in fact most employers with affiliates, would not have consistent preference profiles most of the time. If employers do not admit affiliate-agnostic preferences, then we should conclude that the affiliate matching problem is novel and worth modeling in a new way.
In the next section, we devise a crowdsourced survey to study if employers are actually affiliate-agnostic.

\section{Experimental Evidence in Support of the Affiliate Matching Model}\label{sec:survey}

We hypothesize that: (1) there is such variation in how employers combine their preferences over both sides of the market to maximize their utility, and (2) the affiliation induces the employer to prefer for their affiliates to be matched to more preferred employers. We study these two claims through the use of a crowdsourced survey. 

\subsection{Survey Design}
We created a survey about a faculty hiring program, and we administered using the crowdsource platform Cint. We presented the participants with a series of questions like in Example \ref{expl: consistency}: {\it Assume you are BMU in a market with three agents on either side, and your affiliate is Ryan. If you assume your preferences over schools are $\hat{\prec}$ and your preferences over applicants are $\overline{\prec}$, what is your preference profile $\prec$?}

The structure of the survey was designed as follows. There was a short introduction section which presented the faculty hiring problem and primed the participant to consider themselves to be one of the universities. Then, there was a topic-related test which asked the participant to match the text based description of a match to the visual representation thereof. This was used to screen out participants who were just clicking-through the survey. Next, the concept of affiliates was introduced and the five possible matchings were enumerated, as in Figure \ref{fig:example marketplace} where $\mu_1$ and $\mu_2$ have the same outcome if you're BMU. We then randomly assigned participants to be primed to believe that prioritizing their affiliate's matching would ultimately lead to a better outcome for their university in the future. Part of this priming language was: {\it So, keep in mind that if Ryan is placed at a higher ranked university, then your university will be viewed better and could hire better candidates in the future.}

Finally, the participants were asked to rank the five possible matchings based on randomly assigned $\hat{\succ}$ and $\overline{\succ}$, as in Example \ref{expl: consistency}. 
We only aimed to test how the employer, BMU, would behave based on where it was placed in $\hat{\succ}$, and also based on where its student, Ryan, was placed in $\overline{\succ}$. Therefore, there are a total of 9 different scenarios. The first scenario places Ryan as the top choice in $\overline{\succ}$ and BMU as the top choice in $\hat{\succ}$. (The order of the remaining agents was randomized.) The second scenario keeps BMU as the top choice, but makes Ryan the second choice. The third scenario again keeps BMU as the top choice, and makes Ryan the third choice. The fourth through sixth scenarios maintain the same pattern with BMU as the second choice. Similarly, the seventh through ninth scenarios have BMU as the third choice.
In the survey, the participants were shown seven of these nine scenarios in a random order.
The participants were also given the opportunity to explain why they chose that particular ordering.

In designing the survey in this way, we had two main {\bf research questions}:
\begin{enumerate}
    \item How common or infrequent are the different possible prioritization strategies $\alpha(\hat{\prec_e},\overline{\prec_e})=\prec_e$?
    \item Do participants care about the placement of their affiliate?
\end{enumerate}

The survey design was IRB reviewed and subsequently determined to be exempt from IRB oversight because the data were crowdsourced and we did not collect any personally identifiable information or demographics of the participants.
The entirety of the survey protocol is listed in the Appendix. 

\subsection{Survey Results}

\newcommand{\medianTime}{19.8 minutes}
\newcommand{\totalParticipants}{200}
\newcommand{\totalFinal}{154}

We used the Cint crowdsourcing platform to recruit participants. The only requirement for participation was to be English-speaking and be located in the United States. We did also filter out participants who did not pass our click through, comprehension test. The final data included \totalParticipants{} participants which we further filtered down to \totalFinal{} participants after analyzing the participants responses for any further click-throughs or participants who self-identified as not-understanding the survey. The median time to take the survey of these participants was \medianTime{}. We paid Cint \$3.05 for each of the \totalParticipants{} responses.

  \begin{table}[]
  \centering
  \begin{adjustbox}{max width=\linewidth}
\begin{tabular}{@{}lccccccccc@{}}
\toprule
                                                                & \multicolumn{9}{c}{Scenario}                                            \\ 
                                                                & 1      & 2     & 3     & 4      & 5     & 6     & 7     & 8     & 9     \\\midrule
Top 1 & 80\% & 49\% & 40\% & 47\% & 50\% & 42\% & 38\% & 42\% & 38\%\\
Top 2 & 48\% & 26\% & 24\% & 27\% & 25\% & 18\% & 20\% & 22\% & 18\%\\
Top 3 & 19\% & 20\% & 13\% & 23\% & 18\% & 12\% & 11\% & 10\% & 11\%\\
Top 4 & 18\% & 15\% & 12\% & 14\% & 13\% & 10\% & 9\% & 9\% & 9\%\\
Top 5 & 18\% & 15\% & 12\% & 14\% & 13\% & 10\% & 9\% & 9\% & 9\%\\
 \bottomrule
\end{tabular}
\end{adjustbox}
  \caption{Percentage agreement for {\it most common} Top $k$ elements of the of the preference profile across the nine scenarios. The `Top 1' condition reports the percentage of respondents who agreed with the most common outcome which they {\it most prefer}. The `Top 2' condition reports the percentage of respondents who agreed with the most common outcome for the {\it most and second most preferred}. `Top 5' reports the percentage of respondents who agreed on the most common full preference profile $\succ$. }
  \label{tab:agreement}
\end{table}

For our first research question, we ask whether the respondents were responding randomly to the different scenarios. We observe qualitatively that there is relatively strong agreement on the entire preference profiles, though that diminishes as BMU becomes less desirable. For instance, if BMU and Ryan are both in the top tier, 18\% of respondents agreed on the \emph{full} preference profile. This diminished to just 9\% when BMU and Ryan were the lowest ranked. However, agreement on the entire preference profile is a strict condition. 

We found anecdotally that participants' strategies for the survey response were to identify the top two or three outcomes they preferred and had less careful consideration for the 4th and 5th outcomes. Therefore, we also report agreement on the preference orderings for the first one, two, and three outcomes. Results are reported in Table \ref{tab:agreement}.

When we ask how common are the most preferred top three elements of the preference profile, we observe that that when BMU and Ryan are both top tier, agreement increases slightly from 18\% to 19\%. It increases to 48\% when considering agreement on the ordering of the top two, and finally achieves 80\% agreement for the most common top choice in this scenario. 

We observe similar qualitative behavior for the other scenarios; when BMU and Ryan are in the bottom tier, the most common top choice is only chosen by 38\% of respondents. 

We conclude from these results that there is structure in the survey responses, with some particular strategies, $\alpha$, being more common than others, and there not being one governing strategy for any of the scenarios which most respondents can agree is proper.

  \begin{table}[]
  \centering
  \begin{adjustbox}{max width=\linewidth}
\begin{tabular}{@{}lccccccccc@{}}
\toprule
                                                                & \multicolumn{9}{c}{Scenario}                                            \\ 
                                                                & 1      & 2     & 3     & 4      & 5     & 6     & 7     & 8     & 9     \\\midrule
\begin{tabular}[c]{@{}l@{}}Consistency\\ Frequency\end{tabular} & $26\%$ & $7\%$ & $4\%$ & $14\%$ & $4\%$ & $4\%$ & $9\%$ & $2\%$ & $1\%$ \\ \bottomrule
\end{tabular}
\end{adjustbox}
  \caption{Percentage of full preference profiles which were consistent for each scenario.}
  \label{tab:consistency}
\end{table}

For our second research question, we ask the question: how common is it for the respondents to choose a preference profile which is consistent? Recall that for any given $\overline{\succ}$ and $\hat{\succ}$, there are only four preference profiles that are considered consistent. We report the percentage of responses which are consistent for each of the nine scenarios in Table \ref{tab:consistency}. Here we see that the first and fourth scenarios have higher percentages of consistent responses than the others. One might also say that the seventh scenario also has a higher percentage, compared to its related scenarios eight and nine. The first, fourth, and seventh scenarios all correspond to the case where Ryan is the top tier student. This indicates that consistency is generally more common in these scenarios when controlling for the location of BMU. Thus, when an employer's student is ranked lower, there is a higher likelihood of having an inconsistent preference profile.

We also analyze how frequently individual respondents reported a consistent profile. We see that a total of 58 out of \totalFinal{} respondents reported a consistent preference at least once, 17 did so twice, seven did so three times, and no one did more than three. Interestingly, four of the seven who reported three consistent profiles did so on the 1st, fourth, and seventh scenarios. This indicates there actually is some structure about Ryan being the top student and the consistency of the profile. 

Recall that part of the survey design included a priming for respondents to consider the quality of match of their student as equally as important as their own match. We also found that the priming had no impact on the frequency of consistent responses. 

These results indicate two things: (1) consistency is dependent on the given scenario as dictated by $\overline{\succ}$ and $\hat{\succ}$, and (2) consistency is relatively infrequent in the majority of the cases.

\subsection{Limitations}

We briefly discuss the limitations of this experiment. Primarily, this was a survey experiment with non-experts in graduate hiring. Therefore, there might be some desirability effects in the response data, and the noise is likely higher than it would have been if we had used domain experts. It is challenging to find experts in this field who may better understand how competing priorities are dealt with when consider the match of a university and the match of that university's affiliates. It would be very interesting to conduct a similar survey with that population.

Further, the data quantity is limited. The survey would be enhanced with more responses to test perform a Chi-squared test about the randomness of the responses. However, we have strong evidence that there is structure in these responses and respondents were (generally) not just clicking through.

Finally, we tried to test about how priming the respondent might impact their response. We did not find any significant difference in the two group's consistency, which could either be a poor priming model or that there is no effect in this crowdsourced survey design.

\section{%
What is Stability in Affiliate Matching?%
}\label{sec:stability}

With any new model for a market, it is important to get the definition of stability correct. In normal stable marriage, a matching is stable if there does not exist a blocking pair $(a,e)$ which would each prefer each other over their current match. In affiliate matching, it is clear when an applicant might want to deviate from their match, as their preference language in this model is the same as in the stable marriage problem. An applicant $a$ would only want to leave a marketplace with match $\mu$ if they could find a consenting individual $e$ such that $\mu(a)\prec_a e$.

In the affiliate matching problem, the employer's preference language includes preferences over their affiliate's match. This detail provides a level of nuance which we will discuss in this section. 

Let us take the example in Figure \ref{fig:example marketplace}. Consider match $\mu_1$ and employer $e_1$. For this employer, the current match is its second most preferred outcome. However, employer $e_1$ would prefer to be matched with applicant $a_2$ as long as applicant $a_1$ is matched with employer $e_3$. Further, applicant $a_2$'s most preferred match is with employer $e_1$. However, if $a_2$ and $e_1$ left the marketplace to be matched together, $e_1$ cannot guarantee that $a_1$ will be matched with $e_3$, which would result in its most preferred outcome. If, on the other hand, as a consequence of $a_2$ and $e_1$ leaving the marketplace, it happened that $a_1$ was matched with $e_2$, that would leave $e_1$ worse off than if it stayed under $\mu_1$. So in the face of this uncertainty, the essential question is: would $e_1$ leave this marketplace? Without knowledge of the second order effects of its decision to leave, $e_1$ must make a complex decision. As mechanism designers, this presents an interesting challenge and one which we will explore with several notions of stability in this section. 

\subsection{Greedy Stability}

Under the scenario where $e_1$ is currently matched under $\mu_1$, they may choose to leave the marketplace to be matched with $a_2$ solely because a match with $a_2$ may produce a better outcome from them. Formally, this can be defined by saying that $e_1$ would deviate if there exists some match with a consenting applicant which could improve their outcome, regardless of whether that match is the final result. We put this formally with a definition of a greedy blocking pair. 

\begin{definition}
A match $\mu$ has a {\bf greedy blocking pair} $(a,e)$ if $e \succ_a \mu(a)$ and there exists $\gamma\in E^{r_e}$ such that $(a,\gamma) \succ_e (\mu(e),\mu(R_e))$. \\
Equivalently, a match $\mu$ has a {\bf greedy blocking pair} $(a,e)$ if there exists a match $\mu'$ such that the following hold: (1) $\mu'(a)=e$, (2) $e \succ_a \mu(a)$, and (3) $(a,\mu'(R_e)) \succ_e (\mu(e),\mu(R_e))$.
\end{definition}

We can then define stability as the absence of a greedy blocking pair.

\begin{definition}
A matching $\mu$ is {\bf greedily stable} if there does not exists a greedy blocking pair $(a,e)$.
\end{definition}

There are two interesting properties of this definition of stability: it reduces to the stable marriage problem under consistent preferences, but it may have no stable match if the preferences are allowed to be inconsistent.

\subsubsection{Greedy Stability and Consistent Preferences}

It is easy to see that if every employer's preferences are consistent, then we can treat the problem as a stable marriage.

\begin{proposition}
Consider an affiliate matching marketplace where each employer is affiliate-agnostic, i.e. has expressed consistent preferences. The imposition of greedy stability onto this marketplace reduces the problem to stable marriage, and all results therefrom can be concluded.
\end{proposition}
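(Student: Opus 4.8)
The plan is to exhibit an explicit, preference-preserving correspondence between the affiliate matching marketplace under consistent employer preferences and an ordinary stable marriage instance, and then to check that the correspondence carries greedy blocking pairs to classical blocking pairs and vice versa. First I would construct the stable marriage instance: keep the same applicant set $A$ and employer set $E$, keep each applicant's preference $\prec_a$ unchanged, and assign to each employer $e$ the preference ordering over applicants given by its $\overline{\succ_e}$ (the ``internal'' ordering over applicants, which is well-defined because $e$'s profile $\succ_e$ is consistent, so the first-coordinate ordering induced by $\succ_e$ agrees with $\overline{\succ_e}$). This is a legitimate stable marriage instance with strict preferences on both sides.

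Next I would show the two notions of blocking pair coincide under this correspondence. Take any matching $\mu$ and any pair $(a,e)$. In the affiliate instance, $(a,e)$ is a greedy blocking pair iff $e \succ_a \mu(a)$ and there exists $\gamma \in E^{r_e}$ with $(a,\gamma) \succ_e (\mu(e),\mu(R_e))$. The applicant-side condition $e \succ_a \mu(a)$ is literally the same as in the stable marriage instance, so the only work is the employer side. Because $e$'s preferences are consistent, whenever $a \mathrel{\overline{\succ_e}} \mu(e)$ we can find some completion $\gamma$ making $(a,\gamma) \succ_e (\mu(e),\mu(R_e))$ — indeed any tuple whose first coordinate is $a$ beats any tuple whose first coordinate is $\mu(e)$, since consistency forces the first-coordinate order to dominate; and conversely, if $(a,\gamma)\succ_e(\mu(e),\mu(R_e))$ for some $\gamma$ then, again by consistency, it cannot be that $\mu(e) \mathrel{\overline{\succ_e}} a$, and since preferences are strict and $a\neq\mu(e)$ (a blocking pair is never the current pair) we get $a \mathrel{\overline{\succ_e}} \mu(e)$. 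Hence the employer-side greedy condition is equivalent to $a \mathrel{\overline{\succ_e}} \mu(e)$, which is exactly the employer-side blocking condition in the stable marriage instance. Therefore $\mu$ is greedily stable in the affiliate instance iff $\mu$ is stable in the constructed stable marriage instance.

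Finally, since the map between matchings is the identity on the common ground set $A \sqcup E$ and it preserves stability in both directions, every structural result about stable marriage — existence of a stable matching via Gale--Shapley, the lattice structure of stable matchings, the optimality of the applicant-proposing (resp.\ employer-proposing) outcome, incentive properties, and so on — transfers verbatim to the affiliate-agnostic marketplace under greedy stability. I would close by remarking that this is exactly the formal sense in which greedy stability ``reduces to stable marriage'' on consistent instances.

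The main obstacle is the employer-side equivalence: one has to be careful that ``there exists a completion $\gamma$'' in the greedy definition is neither too weak nor too strong. The subtlety is that $\gamma$ ranges over \emph{all} of $E^{r_e}$, including inconsistent tuples that $\prec_e$ may not even rank; the argument must use consistency precisely to guarantee (a) that a favorable completion always exists when $a$ outranks $\mu(e)$ in $\overline{\succ_e}$, and (b) that no completion can rescue a pair $(a,e)$ in which $\mu(e)$ already outranks $a$. Both directions hinge on the defining property of consistency — that the first-coordinate projection of $\succ_e$ refines (equals, under strictness) $\overline{\succ_e}$ — so stating and invoking that property cleanly is where the real content of the proof lies; the rest is bookkeeping.
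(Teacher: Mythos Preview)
Your proposal is correct and follows essentially the same approach as the paper's proof: both use the defining property of consistency (that the first-coordinate projection of $\succ_e$ respects $\overline{\succ_e}$) to identify greedy blocking pairs with classical blocking pairs in the stable-marriage instance obtained by giving each employer the preference $\overline{\succ_e}$. If anything, your version is more careful---you explicitly verify both directions of the employer-side equivalence, whereas the paper's short proof only states the ``only if'' direction.
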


\begin{proof}
By the definition of consistent preferences for the employer, the first element of every tuple in $\succ_e$ follows an order (that of $\overline{\succ}$). This means that under the definition of greedy blocking pair, we will have $(a,e)$ as a blocking pair of match $\mu$ only if $a \overline{\succ} \mu(e)$. Therefore, we can reduce the problem to that of stable marriage where the employers preferences are expressed as $\overline{\succ}$ and the applicant's preferences are expressed as $\prec_a$. 
\end{proof}

However, without consistency of the preferences, there may not be a stable match, i.e., the core may be empty. 

\subsubsection{Empty Core}

The affiliate matching problem with greedy stability may have an empty core, as seen in Example \ref{ex: empty}.

\begin{example}\label{ex: empty}
Consider the matching marketplace depicted in Figure \ref{fig:example marketplace} of three employers and three employer where each employer has only one affiliate (signified in the grey connections).
We observe that each employer does not admit consistent preferences. 
In this setting, every match has a greedy blocking pair.
The matching $\mu_1$ is blocked by greedy blocking pair $(a_1,e_2)$ which prefers $\mu_4$.
The matching $\mu_2$ is blocked by greedy blocking pair $(a_1,e_2)$ which prefers $\mu_4$.
The matching $\mu_3$ is blocked by greedy blocking pair $(a_1,e_3)$ which prefers $\mu_6$.
The matching $\mu_4$ is blocked by greedy blocking pair $(a_1,e_3)$ which prefers $\mu_6$.
The matching $\mu_5$ is blocked by greedy blocking pair $(a_3,e_3)$ which prefers $\mu_1$.
The matching $\mu_6$ is blocked by greedy blocking pair $(a_2,e_1)$ which prefers $\mu_5$.
\end{example}

\subsection{Integer-Program-based Approaches to Clearing Affiliate Matching Markets}\label{sec:stability-IP}

General clearing approaches that can take one of many objective functions, or one of many sets of constraints (such as one or more notions of stability) are necessary when exploring the fielding of a new matching market.  Thus, it is natural to look at integer programs, particularly with an eye towards real world implementation of this model where we may wish to maximize socially desirable outcomes, like number of agents with their top choice, some notion of fairness or equity, or other linearly expressed optimization objectives.

We proceed by giving an integer program of greedy stable affiliate matching. Let $Z$ be a binary matrix with $z_{ij}$ as an indicator variable of whether $a_i$ is matched to $e_j$. 

To enforce the definition of a match, we require that every agent is matched with at most one other agent.
\begin{align}
    \sum_{i} z_{ij} &\leq 1 \qquad\qquad \forall j\\
    \sum_{j} z_{ij} &\leq 1 \qquad\qquad \forall i
\end{align}

To ensure the match is stable, we need to require that for all pairs $(i,j)$, the tuple $(a_i,e_j)$ is not a blocking pair. 
So either, the pair is matched, i.e. $\mu(a_i)=e_j$ which happens if and only if $z_{ij}=1$, or they are not matched.
If they are not matched, then there are two options: (1) $a_i$ is matched to someone they prefer over $e_j$, or (2) there exists some applicant who $e_j$ ``prefers more"\footnote{This is in quotes because the preference language is not exclusively over agents.} given the current matchings of $e_j$'s affiliates. 

The second condition warrants some extra care. 
The condition for the employers in the definition of a blocking pair states: there exists $\gamma\in E^{r_e}$ such that $(a,\gamma)\succ_e(\mu(e),\mu(R_e))$.
Its logical negation is: $(a,\gamma)\prec_e(\mu(e),\mu(R_e))$ for all $\gamma\in E^{r_e}$.
As we are forming the linear program constraints, for a given $Z$, we can map the affiliates of $e$ and denote $Z(R_e)$. 
We note that $Z(R_e)\in E^{r_e}$ and is fixed for a given $Z$. 
Therefore, to test if $e$ breaks the blocking pair condition, we look for those $a'$ such that $(a,\gamma) \prec_e (a',Z(R_e))$ for all $\gamma$.
We now state the stability condition of the integer program: for all $(i,j)$,
\begin{align}
    z_{ij} + \sum_{\ell: e_j \prec_{a_i} e_\ell} z_{\ell j} + \sum_{\ell : \forall\gamma ,(a_i,\gamma)\prec_{e_j}(a_\ell,Z(R_{e_j}))} z_{i\ell} \geq 1.
\end{align}

\subsection{Strict Stability}

Intuitively, an employer may only wish to deviate from a match if they can find other agents among whom they all match and strictly improve their outcome with respect to their full preference profile $\succ$. Recall that the problem with $a_2$ and $e_1$ leaving the match of $\mu_1$ is that $e_1$ could either end up in a better position or a worse position, depending on what happens to its affiliates. 

In the following discussion, we introduce strict stability, which says that an employer wouldn't deviate from a match unless it is sure it would improve its outcome. In the running example of $(a_2, e_1)$, this means that $e_1$ wouldn't deviate from their match unless the final match of the entire system were $\mu_5$.

\ignore{\subsection{Strict Blocking Pair}

\begin{definition}
A match $\mu$ has a {\bf strict blocking pair} $(a,e)$ with respect to an alternate matching $\mu'$ such that the following conditions hold: (1) $\mu'(a)=e$, (2) $e \succ_a \mu(a)$, and (3) $(a,\mu'(R_e)) \succ_e (\mu(e),\mu(R_e))$.
\end{definition}

\begin{definition}
A matching $\mu$ is {\bf strictly stable} if there does not exists a strict blocking pair $(a,e)$ with respect to $\mu'$ such that $\mu'$ is also strictly stable. 
\end{definition}}

\begin{definition}\label{def: strict coalition}
A match $\mu$ has a {\bf strict blocking coalition} $(C_A, C_E)$ if there exists a match $\mu'$ where the following hold:
\begin{enumerate}
    \item $C_A\subset A$, $C_E\subset E$;
    \item for all $e\in C_E$, it is true that $R_e \subset C_A$, $\mu'(e)\in C_A$, and  $(\mu'(e),\mu'(R_e)) \succ_e (\mu(e),\mu(R_e))$; and
    \item for all $a\in C_A$, it is true that $\mu'(a) \succ_a \mu(a)$ and $\mu'(a)\in C_E$.
\end{enumerate}
\end{definition}

A direct result of this definition is that this coalition entirely matches with itself, and more importantly every affiliate of every employer is matched to another one of the employers in the coalition.  

Returning to our example of in Figure \ref{fig:example marketplace} where the system is currently matched under $\mu_1$. Recall that $e_1$ would like to consider coalition with $a_2$ in order to improve its outcome. However, in order to form a strict blocking coalition, the coalition would at least have to comprise of $a_1, a_2$, and $e_1, e_2$. One can check that $\mu_5$ is preferred to $\mu_1$ by all elements of this coalition, thus they have formed a strict blocking coalition. Note, that by doing so, $a_3$'s new outcome would move from their first choice to their last, and $e_3$ would also wind up with their last choice. 

We now state the obvious definition of strictly stable and show that Figure \ref{fig:example marketplace} is stable.

\begin{definition}
A match $\mu$ is {\bf strictly stable} if any coalition $(C_A,C_E)$ is not a strictly blocking coalition. 
\end{definition}

\begin{example}
The marketplace in Figure \ref{fig:example marketplace} has a strictly stable match: $\mu_1$. This match is unique because it is comprised of three coalitions already $(a_i, e_i)$ for $i=1,2,3$. The only other coalitions to consider are as follows.
The coalition $(\{a_1,a_2\},\{e_1,e_2\})$ is not blocking because $e_1$ does not prefer $\mu_3$ to $\mu_1$. 
The coalition $(\{a_1,a_3\},\{e_1,e_3\})$ is not blocking because $a_3$ does not prefer $\mu_6$ to $\mu_1$.
The coalition $(\{a_2,a_3\},\{e_2,e_3\})$ is not blocking because $e_3$ does not prefer $\mu_2$ to $\mu_1$.
\end{example}

\section{%
Designing Affiliate Matching Mechanisms: Desiderata and Challenges%
}\label{sec:challenges}

In this final section, we outline multi-disciplinary research challenges that should be addressed first as this topic develops. We believe that without a community discussion and consensus on the following topics, then there may not be a clear answer as to the form of mechanism to deploy for the affiliate matching problem---if such a mechanism is appropriate to deploy at all.
These challenges are loosely coupled, but exist with distinct motivations and implications. 
We conclude with a discussion of some additional ethical considerations that place affiliate matching in the context of economic mechanisms.

\begin{challenge}\label{rc: prestige}
Does an affiliate matching mechanism reify notions of prestige in a way that produces harm?
\end{challenge}

The importance and impact of prestige has been well-documented in the literature on labor markets~\citep[e.g.,][]{caplow1958academic,crane1965scientists,long1979entrance,burris2004academic,way2016gender, sauder2012status, clauset2015systematic}.
Those individuals who are able to obtain positions at prestigious employers have access to  distinctive resources which increase their ability to yield more desirable outcomes in their careers. These impacts often perpetuate themselves and become ingrained into unchangeable systems which operate to advance the interests of those at the top of a stratified system. 

\ignore{These hierarchical systems also appear to operate under social network effects. \citet{burris2004academic} argues that the social network created by insular PhD exchanges is one of the most important forms of social capital in the marketplace. They argue that  exchanging PhDs among institutions serves to produce and reinforce status separations within the hierarchy. Through this lens, we believe that Definition \ref{def: strict coalition} of a strict blocking coalition aligns well with this network effect. As a mechanism designer, it would be imperative that there not be a self-dealing coalition of employers and affiliates who leave a marketplace to advance themselves at the exclusion of others.}

Under \citet{bourdieu1986forms}'s conception of social and economic capital, the model which we created above could be used to enrich those who currently have some social capital with more forms of economic and social capital. Depending on your view, this is an inevitable outcome of humans and network effects, or this is a phenomenon which we can try to work against in the design of some matching mechanisms. Without a clear and thoughtful consensus on the appropriate goal of a central mechanism in this problem, any further work in this space could be pointless.

\begin{challenge}\label{rc: current}
How do current affiliate marketplaces operate?
\end{challenge}

In order to understand truly if an affiliate matching mechanism is appropriate, there must be a thorough study of current marketplaces. We must first understand how the affiliate marketplaces function and what their inefficiencies are. Scholarship in this research challenge will bring about clarity with respect to how to properly incorporate any proposed, appropriate matching mechanism. 

Work, like that cited in Research Challenge \ref{rc: prestige} include some answers to these questions. Furthering scholarship~\citep[see, e.g.,][]{wang2003incremental,boutilier2006constraint,domshlak2011preferences,vayanos2020active} can investigate how preference elicitation can be efficiently performed for the complex decision space of employers across $A\times E^{r_e}$.  We also found evidence of strategic behavior on the part of employers in our survey (e.g., from Respondent \#6, ``I will settle for my second choice so he can go to the top tiered school. Worth it I think.'')  Indeed, progress here should motivate extensions to models traditional concerns for the computational social choice community---specifically, preference elicitation and aggregation, and incentive properties~\citep[e.g.,][]{dietrich2007strategy,baumeister2013computational,brandt2016handbook}.  

\begin{challenge}\label{rc: care}
How much do employers care about their affiliates? How much \emph{should} they care?
\end{challenge}

Respondents in our survey had differential perspectives on if it was appropriate to care about the match of their affiliates. Respondent \#133 indicated, ``If it cannot be fair, it must be the least unfair.  The most unfair would be to downgrade the the top two and advantage the lowest.'' While Respondent \#29 said ``Even if Ryan isn't the top candidate, he should be given the chance at the top school.'' Understanding and balancing these divergent views will be crucial to implementation and adoption of any designed central clearinghouse. In-depth study should involve domain experts in labor markets. 

\begin{challenge}\label{rc: stability}
What is the right definition of stability?
\end{challenge}

As we saw in Section~\ref{sec:stability}, differences in the definition of stability can lead to different ``acceptable'' outcomes.
As is true in all designing of matching mechanisms, stability is used as way to enforce participation in a marketplace. The logic is that deviation from a mechanism's stable outcome would be weakly worse for a participant.
These marketplaces operate under the implicit social contract between agents that adherence to the mechanism's output will be mutually beneficial in terms of overall harm reduction.


~\\
It is our hope that by studying and understanding these research challenges more, a community of scholars can form around this  affiliate matching problem and can engage with stakeholders in potential labor markets to determine the appropriateness of a matching mechanism.

\ignore{The purpose of formalizing this model is two fold. We have shown that the affiliate matching problem generalizes the stable marriage problem with well-motivated and interesting results. Subsequently, we believe that this model will yield an interesting line of theoretical results in the future. However, we also believe that, with further careful study of these marketplaces and this model, there might be a way to incorporate a mechanism, like that of the NRMP, to help with clearing in some such labor markets. Above, we have taken the initial steps of outlining a reasonable model and detailing some peculiarities of stability therein. In this section, we will describe some of the impacts of a desired clearing mechanism for such marketplaces. 

A preliminary question is: {\it what motivates an employer to care about the match of its affiliates?} We certainly showed, even in our small survey, that it takes little priming about the faculty hiring marketplace to illicit behavior that is not affiliate-agnostic. While we don't purport to answer this question in this paper, there is plenty of evidence that status and prestige are important drivers of hierarchical marketplaces such as faculty hiring or management consulting.

As with any mechanism design problem, one has to think about agent strategizing. To some, even the use of inconsistent preference profiles by employers could be seen as a form of strategizing. We see then, that under this interpretation of inconsistent preferences, our data suggest that when schools or affiliates are placed lower in the hierarchy, more stategizing can occur. With much of the literature focused on the behavior of the most prestigious agents in these marketplaces, we do not have as much information about how agents who, for whatever reason, are lower in the hierarchy. Further work will be needed to understand differences in strategies by agents at different tiers in the prestige hierarchies.

A final important note about the difficulties of implementing a mechanism designed around the affiliate matching model. We have henceforth studied the market in a one-to-one setting. However, a more realistic model would operate in the one-to-many or many-to-many setting. As is well documented, operating in these settings introduce a variety of complexities with stability \citep{echenique2004theory,hatfield2011contract,roth1990two, sotomayor1999three,sotomayor2004implementation}. There will be many fruitful research directions at the intersection of the greedy and strict notions of stability in affiliate matching and the various stability concepts under the relaxation of a bipartite match. 
}

\subsection{Ethical Impact of our Work}

In addition to the ethical questions that are brought about by the above research challenges, we see other important discussion points. Making socially- and morally-laden decisions are necessary steps when designing economic mechanisms; indeed, the allocation of scarce resources, or matching agents to other agents or opportunities, often comes with unavoidable creation of winners and losers.
One way to enforce socially desirable outcomes in a matching mechanism is by careful definition and analysis of desiderata, such as stability, of a mechanism. Stability in particular yields one form of fairness---it eliminates justifiable envy---which we as mechanism designers can define in a way to produce outcomes that align with social and moral values of a society. 

Additionally, the way that prestige operates in affiliate marketplaces could potentially be viewed as deleterious---enriching those who are already privileged at the expense of those with fewer means and resources. At issue will be whether a course of study in affiliate matching is about understanding the current practices for the sake of understanding, or if the ultimate goal is reform. If it is the latter, any reform has unintended consequences which could ultimately cause more harm than any intended benefit. We hope that this work serves to contribute in a positive direction towards understanding a current system and its inequities, but we also acknowledge that it might ultimately serve to solidify some of these very ideas. 

\section*{Acknowledgements}
This research was supported in part by NSF CAREER Award IIS-1846237, NIST MSE Award \#20126334, DARPA GARD \#HR00112020007, DARPA SI3-CMD \#S4761, DoD WHS Award \#HQ003420F0035, and a Google Faculty Research Award.

We would like to thank Marina Knittel and Duncan McElfresh for their input and thoughtful review of the manuscript. We also thank the anonymous participants of our survey for their time and contributions to this work. 

\bibliographystyle{plainnat}  
\bibliography{refs}

\appendix
\section{Survey Protocol} \label{appendix: protocol}

{\textbf{Faculty Hiring Program}}

~

The design of this survey is aimed at understanding how you make
decisions with different competing priorities. You will be exploring
this concept in the setting of a hiring market for new faculty
professors. The Survey will have two parts: (1) familiarization with
faculty hiring, and (2) answering questions about your preferences. We
begin with the familiarization part now.

Consider a hiring market such as this one with three applicants (Ryan,
Alex, and Taylor), and three universities (Bear Mountain, Littlewood,
and West Shores).

\includegraphics[width=.4\linewidth]{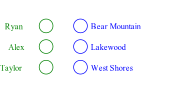}
\\
Imagine that \textbf{you are Bear Mountain University}, and you
performed an evaluation of the applicants. You decided that you liked
Alex more than you liked Taylor, and you liked Taylor more than you
liked Ryan.~

\adjustbox{max width=\linewidth}{%
\begin{tabular}{c c c}
\toprule
{{Top Tier Candidates}} & {{Middle Tier Candidates}} & {{Bottom Tier
Candidates}}\\
Alex & Taylor & Ryan\\
\bottomrule
\end{tabular}
}

\hfill\break

You could depict that preference as:

~

\textbf{{First Choice}}

\includegraphics[width=.3\linewidth]{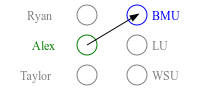}

\textbf{{Second Choice}}

\includegraphics[width=.3\linewidth]{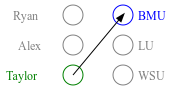}

\textbf{{Third Choice}}

\includegraphics[width=.3\linewidth]{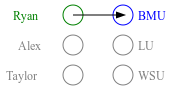}

\noindent\makebox[\linewidth]{\rule{\linewidth}{0.4pt}}

To test your comprehension of the previous setting, can you now do the
matchings yourself? These are intuitive and should be easy to complete.~

~

Assume \textbf{you are Bear Mountain University}. Assume this time that
after you review the applicants, you place them in these tiers:

\adjustbox{max width=\linewidth}{%
\begin{tabular}{c c c}
\toprule
{{Top Tier Candidates}} & {{Middle Tier Candidates}} & {{Bottom Tier
Candidates}}\\
Ryan & Alex & Taylor\\
\bottomrule
\end{tabular}
}

Then what is your ranking of the following options?~

\textbf{Assuming you believe the above}, rank these outcomes from your
most preferred (1) to your least preferred (3).
\includegraphics[width=.3\linewidth]{imgs/alex.png}
\includegraphics[width=.3\linewidth]{imgs/ryan.png}
\includegraphics[width=.3\linewidth]{imgs/taylor.png}

~

Assume \textbf{you are Bear Mountain University}. Assume this time that
after you review the applicants, you place them in these tiers:

\adjustbox{max width=\linewidth}{%
\begin{tabular}{c c c}
\toprule
{{Top Tier Candidates}} & {{Middle Tier Candidates}} & {{Bottom Tier
Candidates}}\\
Alex & Taylor & Ryan\\
\bottomrule
\end{tabular}
}

Then what is your ranking of the following options?~

\textbf{Assuming you believe the above}, rank these outcomes from your
most preferred (1) to your least preferred (3).
\includegraphics[width=.3\linewidth]{imgs/alex.png}
\includegraphics[width=.3\linewidth]{imgs/ryan.png}
\includegraphics[width=.3\linewidth]{imgs/taylor.png}

~

Assume \textbf{you are Bear Mountain University}. Assume this time that
after you review the applicants, you place them in these tiers:

\adjustbox{max width=\linewidth}{%
\begin{tabular}{c c c}
\toprule
{{Top Tier Candidates}} & {{Middle Tier Candidates}} & {{Bottom Tier
Candidates}}\\
Taylor & Alex & Ran\\
\bottomrule
\end{tabular}
}

Then what is your ranking of the following options?~

\textbf{Assuming you believe the above}, rank these outcomes from your
most preferred (1) to your least preferred (3).
\includegraphics[width=.3\linewidth]{imgs/alex.png}
\includegraphics[width=.3\linewidth]{imgs/ryan.png}
\includegraphics[width=.3\linewidth]{imgs/taylor.png}

\noindent\makebox[\linewidth]{\rule{\linewidth}{0.4pt}}

Awesome! Now onto the second part of the Survey. Since you understand
the basic faculty hiring setting, let us introduce another layer of
complexity.~

In faculty hiring, the applicants are affiliated with a university based
off where they earned their PhD. What this means is that universities
also care about where their student gets a job.

\textbf{Assume that you are Bear Mountain University and your student is
Ryan}. You then have the following five options of matchings:

You are matched with Ryan.

\includegraphics[width=.3\linewidth]{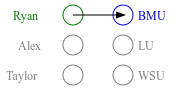}

You are matched with Alex; Ryan is matched with Littlewood University.

\includegraphics[width=.3\linewidth]{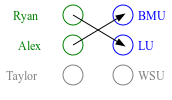}

You are matched with Taylor; Ryan is matched with Littlewood University.

\includegraphics[width=.3\linewidth]{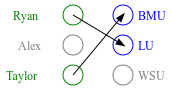}

You are matched with Alex; Ryan is matched with West Shores University.

\includegraphics[width=.3\linewidth]{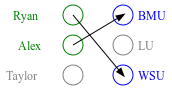}

You are matched with Taylor; Ryan is matched with West Shores University.

\includegraphics[width=.3\linewidth]{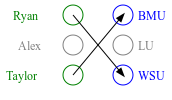}

In the remainder of the survey, we will ask you to express your preferences over these five options under different settings of which applicants you like best and what you think of the different universities.

For the remainder of the survey, you will get to decide how you would like to balance your own interest in being matched with the best candidates possible, and where Ryan should be matched. There is no right answer -- it is up to you about how you balance these two things.

\noindent\makebox[\linewidth]{\rule{\linewidth}{0.4pt}}

{\it Randomly assigned either of the two following prompts:}
\\

For the remainder of the survey, you will get to decide how you would like to balance your own interest in being matched with the best candidates possible, and where Ryan should be matched. There is no right answer -- it is up to you about how you balance these two things.

{\it or}

In faculty hiring, it is common to want your student to be placed at the best university possible. This is often because you as a university will be perceived as a better university if your students get jobs at top tier schools.

So, keep in mind that if Ryan is placed at a higher ranked university, then your university will be viewed better and could hire better candidates in the future. While you want Ryan to be matched with the best school, you must balance this priority with your competing priority that you want the best candidate possible. There is no right answer -- it is up to you about how you balance these two things.
\\

We will ask you to express your preferences to 8 scenarios. When you are ready, please continue to Part 2.

\noindent\makebox[\linewidth]{\rule{\linewidth}{0.4pt}}

{\it Display this question 8 times with the ordering in the tables randomized without replacement.}
\\

Assume that you are Bear Mountain University and Ryan is your student.~

Assume that you have evaluated the candidates and you decided to place
the applicants in the following tiers. You think that Alex is the best
candidate for your job, then Taylor is second best, and Ryan is least
qualified.~

\adjustbox{max width=\linewidth}{%
\begin{tabular}{c c c}
\toprule
{{Top Tier Candidates}} & {{Middle Tier Candidates}} & {{Bottom Tier
Candidates}}\\
Alex & Taylor & Ryan (your student)\\
\bottomrule
\end{tabular}
}

\hfill\break
Assume that you have the following beliefs about the schools, based off
of the school's ranking. You think that Littlewood (LU) is the best
school, then you think your school (BMU) is in the middle tier of
schools, and West Shores (WSU) is in the bottom tier.

\adjustbox{max width=\linewidth}{%
\begin{tabular}{c c c}
\toprule
{{Top Tier Schools}} & {{Middle Tier Schools}} & {{Bottom Tier
Schools}}\\
LU & BMU (your school) & WSU\\
\bottomrule
\end{tabular}
}
\\

{\bf Assuming you believe the above}, rank these outcomes from your most preferred (1) to your least preferred (5).

\noindent\includegraphics[width=.3\linewidth]{imgs/qual_1.png}
\includegraphics[width=.3\linewidth]{imgs/qual_2.png}
\includegraphics[width=.3\linewidth]{imgs/qual_3.png}\\
\includegraphics[width=.3\linewidth]{imgs/qual_4.png}
\includegraphics[width=.3\linewidth]{imgs/qual_5.png}

Can you describe the procedure you used to rank these? Why did you use this procedure?

\end{document}